\newtheorem{definition}{Definition} 
\newtheorem{theorem}{Theorem} 
\newtheorem{proposition}{Proposition} 
\begin{document}
\title{Large Social Networks can be Targeted for Viral Marketing with Small Seed Sets}

\author{\IEEEauthorblockN{Paulo Shakarian and Damon Paulo}
\IEEEauthorblockA{Network Science Center and \\Department of Electrical Engineering and Computer Science\\
United States Military Academy\\
West Point, New York 10996\\
Email: paulo[at]shakarian.net, damon.paulo[at]usma.edu}
}
\maketitle

\begin{abstract}
In a ``tipping'' model, each node in a social network, representing an individual, adopts a behavior if a certain number of his incoming neighbors previously held that property.  A key problem for viral marketers is to determine an initial ``seed'' set in a network such that if given a property then the entire network adopts the behavior.  Here we introduce a method for quickly finding seed sets that scales to very large networks.  Our approach finds a set of nodes that guarantees spreading to the entire network under the tipping model.  After experimentally evaluating $31$ real-world networks, we found that our approach often finds such sets that are several orders of magnitude smaller than the population size.  Our approach also scales well - on a Friendster social network consisting of $5.6$ million nodes and $28$ million edges we found a seed sets in under $3.6$ hours.  We also find that highly clustered local neighborhoods and dense network-wide community structure together suppress the ability of a trend to spread under the tipping model.
\end{abstract}

\IEEEpeerreviewmaketitle

\section{Introduction}
A much studied model in network science, tipping\cite{Gran78,Schelling78,jy05} (a.k.a. deterministic linear threshold\cite{kleinberg}) is often associated with ``seed'' or ``target'' set selection,~\cite{chen09siam} (a.k.a. the maximum influence problem).  In this problem we have a social network in the form of a directed graph and thresholds for each individual.  Based on this data, the desired output is the smallest possible set of individuals such that, if initially activated, the entire population will adopt the new behavior (a seed set).  This problem is NP-Complete~\cite{kleinberg,Dreyer09}.  Although approximation algorithms have been proposed,~\cite{leskovec07,chen09siam,benzwi09,chen10} none seem to scale to very large data sets.  Here, inspired by shell decomposition,~\cite{ShaiCarmi07032007,InfluentialSpreaders_2010,baxter11} we present a method guaranteed to find a set of nodes that causes the entire population to activate - but is not necessarily of minimal size.  We then evaluate the algorithm on $31$ large real-world social networks and show that it often finds very small seed sets (often several orders of magnitude smaller than the population size).  We also show that the size of a seed set is related to Louvain modularity and average clustering coefficient.  Therefore, we find that dense community structure and tight-knit local neighborhoods together inhibit the spreading of trends under the tipping model.

The rest of the paper is organized as follows.  In Section~\ref{prelim-sec}, we provide formal definitions of the tipping model.  This is followed by the presentation of our new algorithm in Section~\ref{alg-sec}.  We then describe our experimental results in Section~\ref{res-sec}.  Finally, we provide an overview of related work in Section~\ref{rw-sec}.
\section{Technical Preliminaries}
\label{prelim-sec}
Throughout this paper we assume the existence of a \textit{social network,} $G=(V,E)$, where $V$ is a set of vertices and $E$ is a set of directed edges.  We will use the notation $n$ and $m$ for the cardinality of $V$ and $E$ respectively.  For a given node $v_i \in V$, the set of incoming neighbors is $\eta^{in}_i$, and the set of outgoing neighbors is $\eta^{out}_i$.  The cardinalities of these sets (and hence the in and out degrees of node $v_i$) are $d^{in}_i, d^{out}_i$ respectively.  We now define a threshold function that for each node returns the fraction of incoming neighbors that must be activated for it to become activate as well.

\begin{definition}[Threshold Function]
We define the \textbf{threshold function} as mapping from V to $ (0,1] $.  Formally: $ \theta: V \rightarrow (0,1] $.
\end{definition}

For the number of neighbors that must be active, we will use the shorthand $k_i$.  Hence, for each $v_i$, $k_i = \lceil \theta (v_i) \cdot d^{in}_i \rceil$.  We now define an \textit{activation function} that, given an initial set of active nodes, returns a set of active nodes after one time step.

\begin{definition}[Activation Function]
Given a threshold function, $ \theta $, an \textbf{activation function} $ A_{\theta} $ maps subsets of V to subsets of V, 
where for some $ V' \subseteq V $,
\begin{equation}
A_{\theta}(V') = V' \cup \{ v_i \in V\ s.t.\ |\eta^{in}_i \cap V'| \geq k_i \} 
\end{equation}
\end{definition}

We now define multiple applications of the activation function.

\begin{definition}[Multiple Applications of the Activation Function]
Given a natural number $ i > 0$, set $V' \subseteq V $, and threshold function, $ \theta $, we define the multiple applications of the activation function, ${A^i_{\theta}}(V')$, as follows:
\begin{equation}
 A^i_\theta(V') = \begin{cases} A_\theta(V' ) & \text{if $i=1$}  \\ A_\theta(A^{i-1}_\theta(V' )) & \text{otherwise} \end{cases} 
\end{equation}
\end{definition}

Clearly, when $ A^i_\theta(V')= A^{i-1}_\theta(V')$  the process has converged.  Further, this occurs in no more than $n$ steps (as, in each step, at least one new node must be activated).  Based on this idea, we define the function $\Gamma$ which returns the set of all nodes activated upon the convergence of the activation function.  

\begin{definition}[$\Gamma$ Function]
Let j be the least value such that $ A^j_{\theta}(V') = A^{j-1}_{\theta}(V') $.  We define the function $\Gamma_\theta : 2^V \rightarrow 2^V$ as follows.
\begin{equation}
\mathbf{ \Gamma_\theta } (V') = A^j_{ \theta }(V')
\end{equation}
\end{definition}

We now have all the pieces to introduce our problem - finding the minimal number of nodes that are initially active to ensure that the entire set $V$ becomes active.

\begin{definition}[The MIN-SEED Problem]
The MIN-SEED Problem is defined as follows: given a threshold function, $ \theta $, return $ V' \subseteq V\ s.t.\
\Gamma_\theta (V') = V $, and there does not exist $ V'' \subseteq V $ where $ |V''| < |V'| $ and 
$ \Gamma_\theta (V'') = V $.
\end{definition}

The following theorem is from the literature~\cite{kleinberg,Dreyer09} and tells us that the MIN-SEED problem is NP-complete.

\begin{theorem}[Complexity of MIN-SEED~\cite{kleinberg,Dreyer09}]
MIN-SEED in NP-Complete.
\end{theorem}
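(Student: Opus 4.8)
The plan is to prove the two standard halves of NP-completeness for the natural decision version of MIN-SEED: given a threshold function $\theta$ and a bound $k$, decide whether there is some $V' \subseteq V$ with $|V'| \le k$ satisfying $\Gamma_\theta(V') = V$. First I would establish membership in NP, and then I would establish NP-hardness by a polynomial, size-preserving reduction from Vertex Cover, which is a classical NP-complete problem.

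For membership in NP, the certificate is simply a candidate seed set $V'$. To verify it, I would compute $\Gamma_\theta(V')$ directly from the definitions: apply $A_\theta$ repeatedly, each application scanning every node $v_i$ and comparing $|\eta^{in}_i \cap S|$ against $k_i$ for the current active set $S$. Since the excerpt already observes that the iteration converges in at most $n$ steps and each step is plainly polynomial in $n$ and $m$, the whole check runs in polynomial time; the verifier then accepts iff $|V'| \le k$ and the computed set equals $V$.

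For NP-hardness, given a Vertex Cover instance consisting of an undirected graph $H=(V_H,E_H)$ and an integer $k$, I would build a tipping instance $G=(V,E)$ as follows. For every vertex $x$ of $H$ I create a \emph{vertex node} and for every edge $e$ of $H$ a \emph{edge node}, and for each edge $e=\{x,y\}$ I install the directed edges $x \to e$, $y \to e$, $e \to x$, and $e \to y$. I would then set the threshold of each edge node to $1/2$, so that $k_e = \lceil (1/2)\cdot 2 \rceil = 1$, and set the threshold of each vertex node to $1$, so that $k_x = d^{in}_x$; that is, a vertex node activates only when all of its incident edge nodes are already active. (Isolated vertices of $H$, if any, may be discarded, since they have in-degree $0$ and play no role in any cover.) This reduction is clearly polynomial, and I claim the minimum seed size in $G$ equals the minimum vertex cover size of $H$.

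The forward direction is easy: if $C$ is a vertex cover, then seeding the corresponding vertex nodes activates, in one step, every edge node (each has an endpoint in $C$), after which every vertex node activates because all of its incident edge nodes are now active, so $\Gamma_\theta(C) = V$. The main obstacle, and the crux of the argument, is the converse: that any seed $V'$ with $\Gamma_\theta(V') = V$ yields a vertex cover of size at most $|V'|$. Here I would first argue we may assume $V'$ contains only vertex nodes, by replacing any seeded edge node $e=\{x,y\}$ with one endpoint $x$, which by the monotonicity of $A_\theta$ cannot shrink the eventually activated set and does not increase $|V'|$. I would then suppose for contradiction that the resulting vertex-node set is not a cover, so some edge $\{x,y\}$ has neither endpoint seeded; activating $e$ requires $x$ or $y$ to be active, while activating the unseeded $x$ (respectively $y$) requires all its incident edge nodes, including $e$, to be active. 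Tracing this cyclic dependency by induction on activation time shows that none of $e$, $x$, $y$ can ever be the first among them to activate, so none activates, contradicting $\Gamma_\theta(V') = V$. This establishes equality of the optimal values, completing the reduction and hence the theorem.
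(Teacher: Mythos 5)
The paper never proves this theorem---it is imported verbatim from \cite{kleinberg,Dreyer09}---so there is no in-paper argument to compare against; your proof has to stand on its own, and it does. Membership in NP is handled correctly (the paper's own observation that $A_\theta$ converges within $n$ iterations makes the verifier polynomial), and the Vertex Cover reduction is sound: with $\theta(e)=1/2$ giving $k_e=1$ for edge gadgets and $\theta(x)=1$ giving $k_x=d^{in}_x$ for vertex gadgets, the forward direction is immediate, and your ``first of $e,x,y$ to activate'' argument correctly shows that any successful seed must, after the exchange step, cover every edge of $H$. Two details are worth tightening. First, the exchange step (replacing a seeded edge node $e=\{x,y\}$ by the endpoint $x$) needs slightly more than ``monotonicity of $A_\theta$'': the new seed $W=(V'\setminus\{e\})\cup\{x\}$ is not a superset of $V'$, so you should observe that $A_\theta(W)\supseteq V'\cup\{x\}$ (because $x$ activates $e$ in one step, $e$ having $k_e=1$) and only then invoke monotonicity to conclude $\Gamma_\theta(W)=\Gamma_\theta(A_\theta(W))\supseteq\Gamma_\theta(V')=V$. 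Second, discarding isolated vertices of $H$ is indeed necessary under the paper's conventions, since a node with $d^{in}_i=0$ has $k_i=\lceil\theta(v_i)\cdot 0\rceil=0$ and activates for free; you noted this, which is good. With the exchange step made explicit, the reduction establishes that the minimum seed size of $G$ equals the minimum vertex cover size of $H$, and the theorem follows.
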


\section{Algorithm}
\label{alg-sec}
To deal with the intractability of the MIN-SEED problem, we design an algorithm that finds a non-trivial subset of nodes that causes the entire graph to  activate, but we do not guarantee that the resulting set will be of minimal size.  The algorithm is based on the idea of shell decomposition often cited in physics literature~\cite{Seidman83,ShaiCarmi07032007,InfluentialSpreaders_2010,baxter11} but modified to ensure that the resulting set will lead to all nodes being activated.  The algorithm, \textsf{TIP\_DECOMP} is presented in this section.

\algsetup{indent=1em}
	\begin{algorithm}[h!]
		\caption{ \textsf{TIP\_DECOMP}}
		\begin{algorithmic}[1]

		\REQUIRE Threshold function, $ \theta $ and directed social network $G=(V,E)$
		\ENSURE $ V' $
		\medskip

		\STATE{ For each vertex $ v_i $, compute $ k_i $}.
		\STATE{ For each vertex $ v_i,\ dist_i = d_i^{in} - k_i $}.
		\STATE{ FLAG = TRUE}.
		\WHILE{ FLAG }
			\STATE {Let $ v_i $ be the element of $ v $ where $ dist_i $ is minimal}.
			\IF { $ dist_i = \infty $}
				\STATE{ FLAG = FALSE}.
			\ELSE
				\STATE{ Remove $ v_i $ from $G$ and for each $ v_j $ in $ \eta_i^{out} $, if $dist_j > 0$, set
				$  dist_j = dist_j-1 $.  Otherwise set $dist_j = \infty $}.
			\ENDIF
		\ENDWHILE
		\RETURN{ All nodes left in $ G $}.
	\end{algorithmic}
\end{algorithm}

%
%
%

Intuitively, the algorithm proceeds as follows (Figure 1).  Given network $G=(V,E)$ where each node $v_i$ has threshold $k_i = \lceil \theta (v_i) \cdot d^{in}_i \rceil$, at each iteration, pick the node for which $d^{in}_i - k_i$ is the least but positive (or $0$) and remove it.  Once there are no nodes for which $d^{in}_i - k_i$ is positive (or $0$), the algorithm outputs the remaining nodes in the network.  

\begin{figure}
    \begin{center}
        \includegraphics[width=1\linewidth]{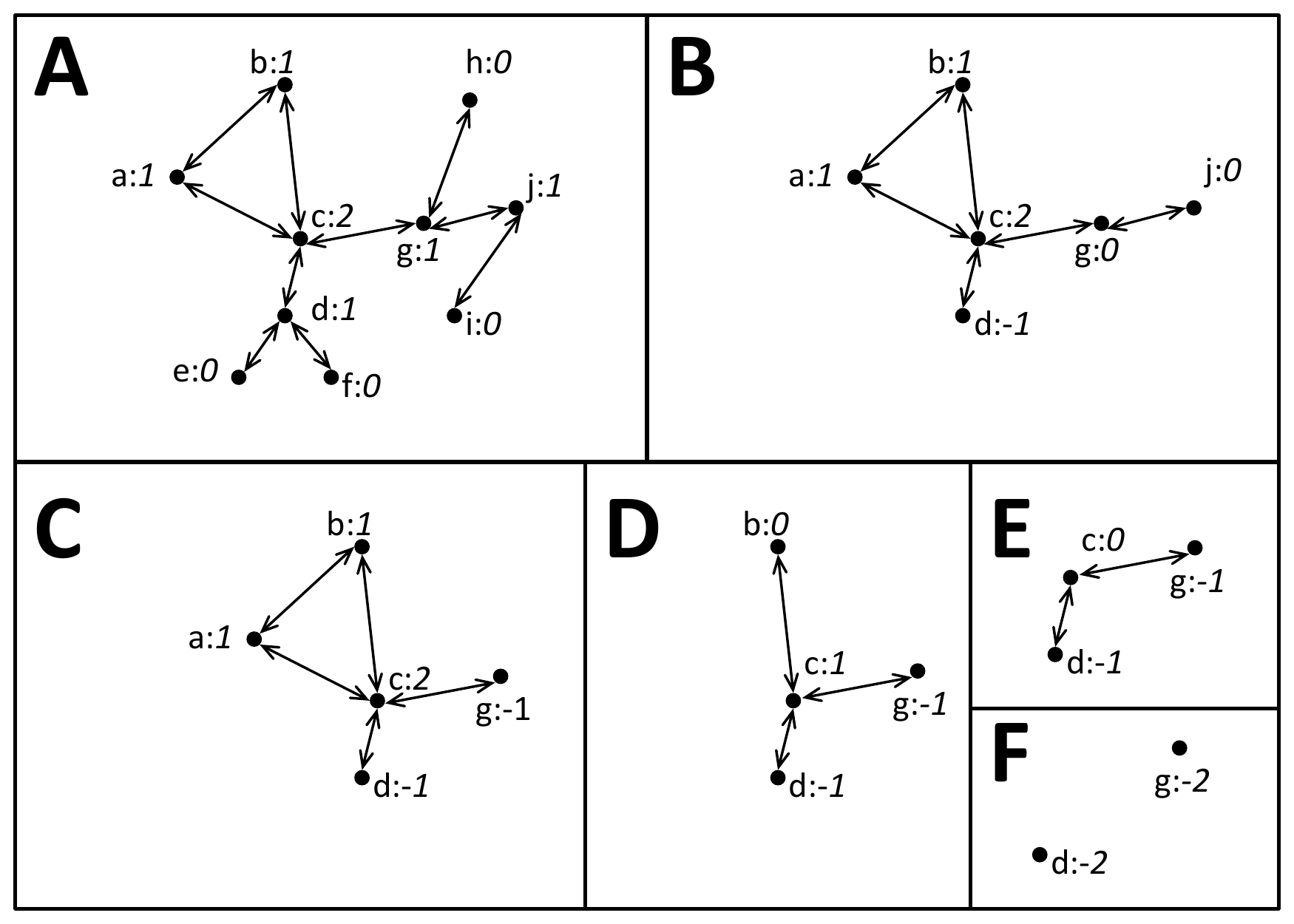}
    \end{center}
\caption{Example of our algorithm for a simple network depicted in box \textbf{A}.  We use a threshold value set to $50\%$ of the node degree.  Next to each node label (lower-case letter) is the value for $d^{in}_i - k_i$ (where $k_i = \lceil \frac{d^{in}_i}{2} \rceil$).  In the first four iterations, nodes e, f, h, and i are removed resulting in the network in box \textbf{B}.  This is followed by the removal of node j resulting in the network in box \textbf{C}.  In the next two iterations, nodes a and b are removed (boxes \textbf{D}-\textbf{E} respectively).  Finally, node c is removed (box \textbf{F}).  The nodes of the final network, consisting of d and g, have negetive values for $d_i-\theta_i$ and become the output of the algorithm.}
\end{figure}

Now, we prove that the resulting set of nodes is guaranteed to cause all nodes in the graph to activate under the tipping model.  This proof follows from the fact that any node removed is activated by the remaining nodes in the network.  

\begin{theorem}
If all nodes in $ V'\ \subseteq\ V $ returned by \textsf{TIP\_DECOMP} are initially active, then every node in $ V $ will eventually be activated, too. 
\end{theorem}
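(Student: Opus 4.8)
The plan is to reactivate the deleted nodes in the reverse of the order in which \textsf{TIP\_DECOMP} removed them, showing that at each stage the node being reactivated already has enough active incoming neighbors to meet its threshold. Since the returned set $V'$ is active by hypothesis, and every deleted node will be shown to activate off of $V'$ together with the nodes deleted after it, the cascade covers all of $V$, giving $\Gamma_\theta(V') = V$.

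First I would establish an invariant that fixes the meaning of the bookkeeping variable $dist_i$: at every iteration of the \textbf{while} loop, for each vertex $v_i$ still present in the reduced graph with $dist_i \neq \infty$, the value $dist_i$ equals the number of incoming neighbors of $v_i$ that are still present, minus $k_i$. At initialization this reads $dist_i = d_i^{in} - k_i$, which is a nonnegative integer because $\theta(v_i) \le 1$ forces $k_i = \lceil \theta(v_i)\, d_i^{in}\rceil \le d_i^{in}$. For the inductive step I would examine the \textbf{else} branch that removes $v_i$: this lowers the residual in-degree of each out-neighbor $v_j$ by exactly one, and the update matches it either by decrementing a positive finite $dist_j$, or, once $dist_j$ has already reached $0$, by locking $dist_j$ to $\infty$ so that $v_j$ can never again be chosen as the minimizer. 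A key corollary is that a finite $dist_i$ is always $\ge 0$; hence, whenever a node $v_i$ is selected for deletion, its residual in-degree is at least $k_i$.

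Next I would convert this into the activation argument. List the deleted vertices as $v_{r_1},\dots,v_{r_t}$ in order of removal. The incoming neighbors of $v_{r_\ell}$ that were still present at the instant it was deleted are exactly those not yet removed, i.e.\ a subset of $\{v_{r_{\ell+1}},\dots,v_{r_t}\}\cup V'$, and by the corollary at least $k_{r_\ell}$ of them exist. I would then run a downward induction on $\ell$ from $t$ down to $1$: assuming $v_{r_{\ell+1}},\dots,v_{r_t}$ are already active and that $V'$ is active by hypothesis, all of the at-least-$k_{r_\ell}$ relevant incoming neighbors of $v_{r_\ell}$ are active, so $|\eta^{in}_{r_\ell}\cap (\text{active set})|\ge k_{r_\ell}$ and one more application of $A_\theta$ adds $v_{r_\ell}$. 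Iterating the activation function therefore activates every deleted node, and hence all of $V$.

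The step I expect to be the main obstacle is the invariant itself: arguing rigorously that the decrement-or-lock update keeps $dist_i$ faithful to the residual in-degree, and in particular that a vertex once locked to $\infty$ is permanently excluded from deletion, so that it genuinely survives into $V'$ and remains available as an active neighbor throughout the reverse cascade. Once the invariant and the ``removed-later-or-surviving'' description of the surviving in-neighbors are in place, the downward induction is routine.
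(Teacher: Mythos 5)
Your proposal is correct and follows essentially the same route as the paper's proof: a reverse-order induction on the removed vertices, using the fact that a vertex is only deleted while its residual in-degree still meets its threshold, so its surviving in-neighbors (later-removed vertices plus $V'$) suffice to activate it. Your explicit invariant for $dist_i$ merely fills in bookkeeping details that the paper's proof leaves implicit.
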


\begin{proof}
Let $ w $ be the total number of nodes removed by \textsf{TIP\_DECOMP}, where $ v_1 $ is the last node removed and $ v_w $ is the first node removed.  We prove the theorem by induction on $ w $ as follows.  We use $P(w)$ to denote the inductive hypothesis which states that all nodes from $ v_1 $ to $ v_w $ are active. In the base case, $P(1)$ trivially holds as we are guaranteed that from set $V'$ there are at least $k_1$ edges to $v_1$ (or it would not be removed).  For the inductive step, assuming $P(w)$ is true, when $ v_{w+1} $ was removed from the graph $ dist_{w+1} \geq 0 $ which means that $ d_{w+1}^{in} \geq k_{w+1} $.  All nodes in $\eta^{in}_{w+1}$ at the time when $ v_{w+1} $ was removed are now active, so $ v_{w+1} $ will now be activated - which completes the proof.
\end{proof}

We also note that by using the appropriate data structure (we used a binomial heap in our implementation), for a network of $n$ nodes and $m$ edges, this algorithm can run in time $O(m \log n)$.

\begin{proposition}
\label{tcomp}
The complexity of \textsf{TIP\_DECOMP} is $ O(m \cdot log(n)) $.
\end{proposition}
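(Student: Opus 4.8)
The plan is to establish the $O(m \log n)$ bound by accounting separately for the three distinct sources of cost: the initialization of the distance values, the repeated extraction of the minimum-distance node, and the updates to neighbors' distances as nodes are removed. I would begin by noting that computing $k_i$ and $dist_i = d^{in}_i - k_i$ for every vertex (lines 1--2) requires a single pass over the adjacency structure, which is $O(n + m)$, and that populating the binomial heap with these $n$ initial keys costs $O(n)$ (or $O(n \log n)$ under a trivial bound, either of which is dominated by the final result). The heart of the argument is the main \textsf{WHILE} loop.

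First I would bound the number of \emph{extract-min} operations: each iteration either removes a node or sets \texttt{FLAG = FALSE} and terminates, so the loop runs at most $n+1$ times, and each call to find and delete the minimum element of a binomial heap costs $O(\log n)$. This contributes $O(n \log n)$. Second, I would bound the \emph{decrease-key} operations triggered in line 9: when a node $v_i$ is removed, we touch each of its outgoing neighbors $v_j \in \eta^{out}_i$ exactly once, either decrementing $dist_j$ or setting it to $\infty$. Since each node is removed at most once, and the total work in line 9 summed over the entire execution visits each edge at most once, the number of such key-modification operations across the whole run is $O(m)$. In a binomial heap each decrease-key costs $O(\log n)$, giving a total of $O(m \log n)$ for these updates.

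Summing the three contributions, $O(n+m) + O(n \log n) + O(m \log n) = O(m \log n)$, assuming the graph is connected (so $m \geq n-1$ and the $n \log n$ term is absorbed). The main obstacle, and the step requiring the most care, is the aggregate analysis of line 9: I must argue that although a single removal can trigger up to $d^{out}_i$ decrease-key calls, the sum $\sum_i d^{out}_i$ over all removed nodes is bounded by $m$ because every directed edge is charged to the removal of its source vertex exactly once. This amortized edge-counting, combined with the logarithmic per-operation cost of the heap, is what delivers the $O(m \log n)$ bound rather than a looser estimate. I would also briefly remark that setting $dist_j = \infty$ rather than decrementing incurs no extra asymptotic cost, since it is still a single heap key-update per edge traversal.
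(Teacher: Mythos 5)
Your proposal is correct and follows exactly the approach the paper intends: the paper gives no explicit proof of Proposition~\ref{tcomp}, only the remark that a binomial heap yields the bound, and your accounting (at most $n$ extract-min operations at $O(\log n)$ each, plus at most $m$ key updates at $O(\log n)$ each, since every directed edge is charged once to the removal of its source) is precisely the standard argument that justifies it. The only caveat worth noting is that the clean $O(m \log n)$ form implicitly assumes $m = \Omega(n)$ so the $O(n \log n)$ extraction cost is absorbed, an assumption you correctly flag and which the paper leaves unstated.
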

\section{Results}
\label{res-sec}
All experiments were run on a computer equipped with an Intel X5677 Xeon Processor operating at 3.46 GHz with a 12 MB Cache.  The machine was running Red Hat Enterprise Linux version 6.1 and equipped with 70 GB of physical memory.  \textsf{TIP\_DECOMP} was written using Python 2.6.6 in 200 lines of code that leveraged the NetworkX library available from http://networkx.lanl.gov/.  The code used a binomial heap library written by Bj\"orn B. Brandenburg available from http://www.cs.unc.edu/$\sim$bbb/.  All statistics presented in this section were calculated using R 2.13.1.

\subsection{Datasets}

In total, we examined $31$ networks: nine academic collaboration networks, three e-mail networks, and $19$ networks extracted from social-media sites.  The sites included included general-purpose social-media (similar to Facebook or MySpace) as well as special-purpose sites (i.e. focused on sharing of blogs, photos, or video).

All datasets used in this paper were obtained from one of four sources: the ASU Social Computing Data Repository,~\cite{Zafarani+Liu:2009} the Stanford Network Analysis Project,~\cite{snap} the University of Michigan,~\cite{umich} and Universitat Rovira i Virgili.\cite{uvi}  All networks considered were symmetric -- i.e. if a directed edge from vertex $v$ to $v'$ exists, there is also an edge from vertex $v'$ to $v$.  Tables~\ref{fig3} (A-C) show some of the pertinent qualities of these networks.  The networks are categorized by the results (explained later in this section).  In what follows, we provide their real-world context.

\subsection{Category A}
\begin{itemize}
\item{\textbf{BlogCatalog} is a social blog directory that allows users to share blogs with friends.~\cite{Zafarani+Liu:2009}  The first two samples of this site, BlogCatalog1 and 2, were taken in Jul. 2009 and June 2010 respectively.  The third sample, BlogCatalog3 was uploaded to ASU's Social Computing Data Repository in Aug. 2010.}
\item{\textbf{Buzznet} is a social media network designed for sharing photographs, journals, and videos.~\cite{Zafarani+Liu:2009}  It was extracted in Nov. 2010.}
\item{\textbf{Douban } is a Chinese social medial website designed to provide user reviews and recommendations.~\cite{Zafarani+Liu:2009}  It was extracted in Dec. 2010.}
\item{\textbf{Flickr} is a social media website that allows users to share photographs.~\cite{Zafarani+Liu:2009}  It was uploaded to ASU's Social Computing Data Repository in Aug. 2010.}
\item{\textbf{Flixster} is a social media website that allows users to share reviews and other information about cinema.~\cite{Zafarani+Liu:2009}  It was extracted in Dec. 2010.}
\item{\textbf{FourSquare} is a location-based social media site.~\cite{Zafarani+Liu:2009}  It was extracted in Dec. 2010.}
\item{\textbf{Frienster} is a general-purpose social-networking site.~\cite{Zafarani+Liu:2009} It was extracted in Nov. 2010.}
\item{\textbf{Last.Fm} is a music-centered social media site.~\cite{Zafarani+Liu:2009} It was extracted in Dec. 2010.}
\item{\textbf{LiveJournal} is a site designed to allow users to share their blogs.~\cite{Zafarani+Liu:2009}  It was extracted in Jul. 2010.}
\item{\textbf{Livemocha} is touted as the ``world's largest language community.''~\cite{Zafarani+Liu:2009}  It was extracted in Dec. 2010.}
\item{\textbf{WikiTalk} is a network of individuals who set and received messages while editing WikiPedia pages.~\cite{snap}  It was extracted in Jan. 2008.}
\end{itemize}

%

\subsection{Category B}
\begin{itemize}
\item{\textbf{Delicious} is a social bookmarking site, designed to allow users to share web bookmarks with their friends.~\cite{Zafarani+Liu:2009} It was extracted in Dec. 2010.}
\item{\textbf{Digg} is a social news website that allows users to share stories with friends.~\cite{Zafarani+Liu:2009}  It was extracted in Dec. 2010.}
\item{\textbf{EU E-Mail} is an e-mail network extracted from a large European Union research institution.~\cite{snap}  It is based on e-mail traffic from Oct. 2003 to May 2005.}
\item{\textbf{Hyves} is a popular general-purpose Dutch social networking site.~\cite{Zafarani+Liu:2009}  It was extracted in Dec. 2010.}
\item{\textbf{Yelp} is a social networking site that allows users to share product reviews.~\cite{Zafarani+Liu:2009}  It was extracted in Nov. 2010.}
\end{itemize}

%
\subsection{Category C}
\begin{itemize}
\item{\textbf{CA-AstroPh} is a an academic collaboration network for Astro Physics from Jan. 1993 - Apr. 2003.~\cite{snap}}
\item{\textbf{CA-CondMat} is an academic collaboration network for Condense Matter Physics.  Samples from 1999 (CondMat99), 2003 (CondMat03), and 2005 (CondMat05) were obtained from the University of Michigan.~\cite{umich}  A second sample from 2003 (CondMat03a) was obtained from Stanford University.~\cite{snap}}
\item{\textbf{CA-GrQc} is a an academic collaboration network for General Relativity and Quantum Cosmology from Jan. 1993 - Apr. 2003.~\cite{snap}}
\item{\textbf{CA-HepPh} is a an academic collaboration network for High Energy Physics - Phenomenology from Jan. 1993 - Apr. 2003.~\cite{snap}}
\item{\textbf{CA-HepTh} is a an academic collaboration network for High Energy Physics - Theory from Jan. 1993 - Apr. 2003.~\cite{snap}}
\item{\textbf{CA-NetSci} is a an academic collaboration network for Network Science from May 2006.}
\item{\textbf{Enron E-Mail} is an e-mail network from the Enron corporation made public by the Federal Energy Regulatory Commission during its investigation.~\cite{snap}}
\item{\textbf{URV E-Mail} is an e-mail network based on communications of members of the University Rovira i Virgili (Tarragona).~\cite{uvi}  It was extracted in 2003.}
\item{\textbf{YouTube} is a video-sharing website that allows users to establish friendship links.~\cite{Zafarani+Liu:2009}  The first sample (YouTube1) was extracted in Dec. 2008.  The second sample (YouTube2) was uploaded to ASU's Social Computing Data Repository in Aug. 2010.}
\end{itemize}

\begin{table}[ht]
     \begin{center}
        \includegraphics[width=.8\linewidth]{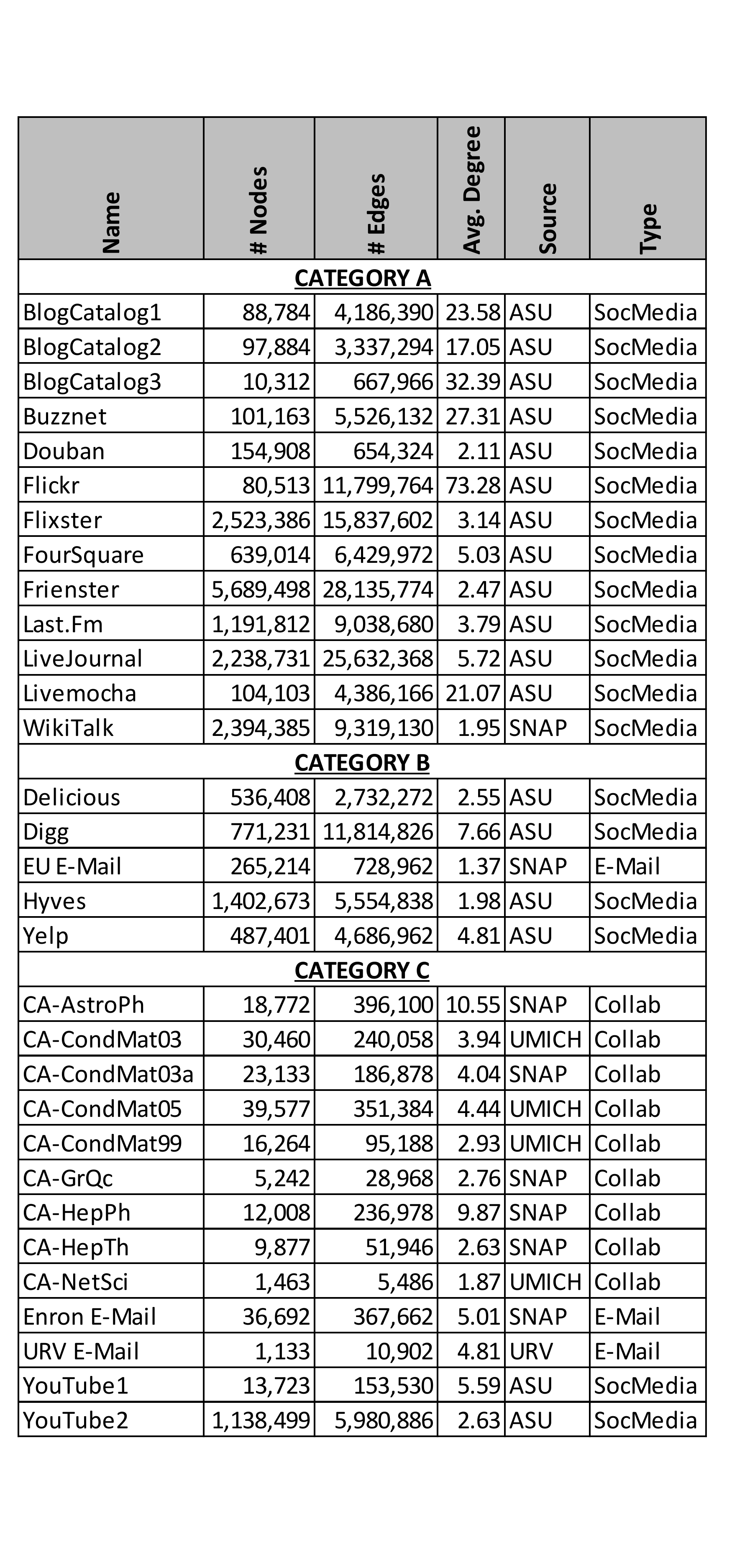}
    \end{center}
    \caption{Information on the networks in Categories A, B, and C.}
    \label{fig3}
\end{table}

\subsection{Runtime}
First, we examined the runtime of the algorithm (see Figure~\ref{figRt}).  Our experiments aligned well with our time complexity result (Proposition~\ref{tcomp}).  For example, a network extracted from the Dutch social-media site Hyves consisting of $1.4$ million nodes and $5.5$ million directed edges was processed by our algorithm in at most $12.2$ minutes.  The often-cited LiveJournal dataset consisting of $2.2$ million nodes and $25.6$ million directed edges was processed in no more than $66$ minutes - a short time for an NP-hard combinatorial problem on a large-sized input.

\begin{figure}[htbb]
    \begin{center}
        \includegraphics[width=.89\linewidth]{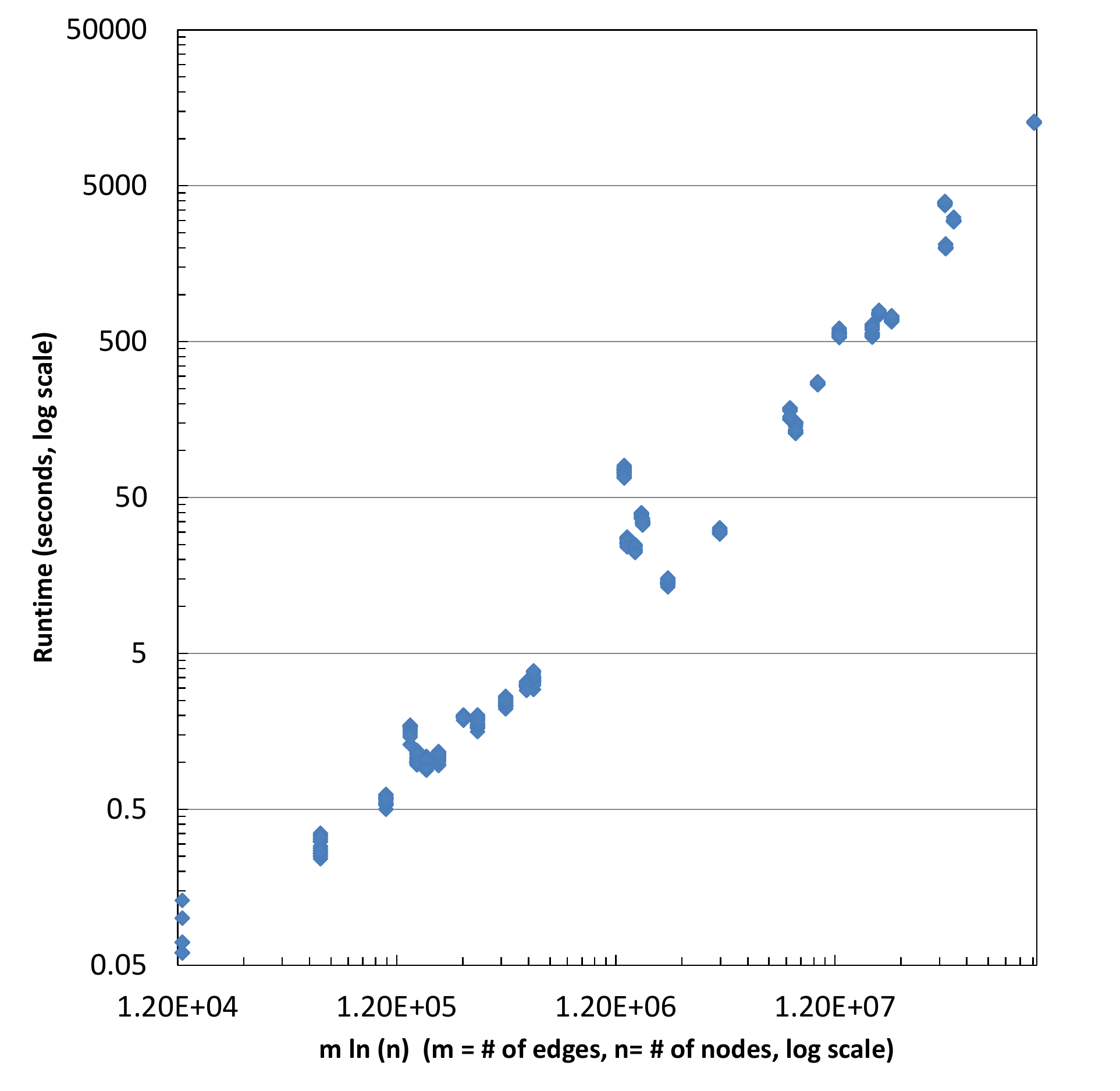}
    \end{center}
    \caption{$m \ln n$ vs. runtime in seconds (log scale, $m$ is number of edges, $n$ is number of nodes).  The relationship is linear with $R^2=0.9015$, $p=2.2 \cdot 10^{-16}$.}
    \label{figRt}
\end{figure}

\subsection{Seed Size}

For each network, we performed $10$ ``integer'' trials.  In these trials, we set $\theta(v_i)=\min(d^{in}_i,k)$ where $k$ was kept constant among all vertices for each trial and set at an integer in the interval $[1,10]$.  We evaluated the ability of a network to promote spreading under the tipping model based on the size of the set of nodes returned by our algorithm (as a percentage of total nodes).  For purposes of discussion, we have grouped our networks into three categories based on results (Figure~\ref{figFirst} and Table~\ref{figX}).  In general, online social networks had the smallest seed sets - $13$ networks of this type had an average seed set size less than $2\%$ of the population.  We also noticed, that for most networks, there was a linear realtion between threshold value and seed size.

\begin{figure}
    \begin{center}
        \includegraphics[width=1\linewidth]{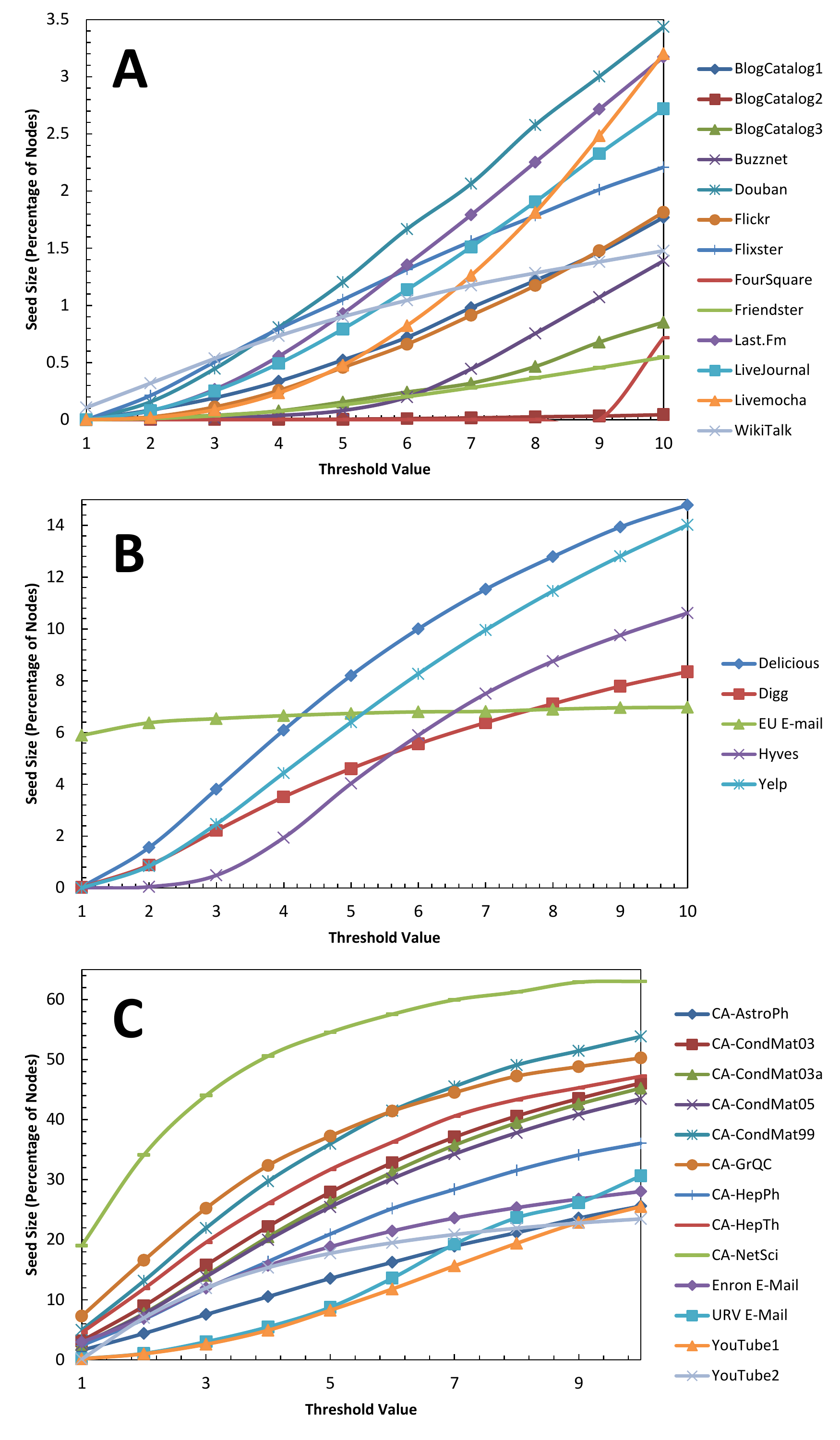}
    \end{center}
       \caption{Threshold value (assigned as an integer in the interval $[1,10]$) vs. size of initial seed set as returned by our algorithm in our three identified categories of networks (categories A-C are depicted in panels A-C respectively).  Average seed sizes were under $2\%$ for Categorty A, $2-10\%$ for Category B and over $10\%$ for Category C.  The relationship, in general, was linear for categories A and B and lograthimic for C.  CA-NetSci had the largest Louvain Modularity and clustering coefficient of all the networks.  This likely explains why that particular network seems to inhibit spreading.}
    \label{figFirst}
\end{figure}

Category A can be thought of as social networks highly susceptible to influence - as a very small fraction of individuals initially having a behavior can lead to adoption by the entire population.  In our ten trials, the average seed size was under $2\%$ for each of these $13$ networks.  All were extracted from social media websites.  For some of the lower threshold levels, the size of the set of seed nodes was particularly small.  For a threshold of three we had $11$ of the Category A networks with a seed size less than $0.5\%$ of the population.  For a threshold of four, we had nine networks meeting that criteria.

Networks in Category B are susceptible to influence with a relatively small set of initial nodes - but not to the extent of those in Category A.  They had an average initial seed size greater than $2\%$ but less than $10\%$. Members in this group included two general purpose social media networks, two specialty social media networks, and an e-mail network.

Category C consisted of networks that seemed to hamper diffusion in the tipping model, having an average initial seed size greater than $10\%$.  This category included all of the academic collaboration networks, two of the email networks, and two networks derived from friendship links on YouTube.

\subsection{Seed Size as a Function of Community Structure}

In this section, we view the results of our heuristic algorithm as a measurement of how well a given network promotes spreading.  Here, we use this measurement to gain insight into which structural aspects make a  network more likely to be ``tipped.''  We compared our results with two network-wide measures characterizing community structure.  First, clustering coefficient ($C$) is defined for a node as the fraction of neighbor pairs that share an edge - making a triangle.  For the undirected case, we define this concept formally below.

\begin{definition}[Clustering Coefficient]
Let $ r $ be the number of edges between nodes with which $ v_i $ has an edge and $d_i$ be the degree of $v_i$.  The \textbf{clustering coefficient}, $ C_i = \dfrac {2r}  {d_i(d_i - 1)} $.
\end{definition}

Intuitively, a node with high $C_i$ tends to have more pairs of friends that are also mutual friends.  We use the average clustering coefficient as a network-wide measure of this local property.

Second, we consider modularity ($M$) defined by Newman and Girvan.~\cite{newman04}.  For a partition of a network, $M$ is a real number in $[-1,1]$ that measures the density of edges within partitions compared to the density of edges between partitions.  We present a formal definition for an undirected network below.

\begin{definition}[Modularity~\cite{newman04}]
\textbf{Modularity}, $ M = \dfrac 1 {2m} \sum_{i,j \in V} [1 - \dfrac {d_i d_j} {2m}] \delta (c_i, c_j) $, where $m$ is the number of undirected edges, $d_i$ is node degree, $ c_i $ is the community to which $ v_i $ belongs and $ \delta (x, y) = 1 $ if $ x = y $ and $ 0 $ otherwise.
\end{definition}

The modularity of an optimal network partition can be used to measure the quality of its community structure.  Though modularity-maximization is NP-hard, the approximation algorithm of Blondel et al.~\cite{blondel08} (a.k.a. the ``Louvain algorithm'') has been shown to produce near-optimal partitions.\footnote{Louvain modularity was computed using the implementation available from CRANS at  http://perso.crans.org/aynaud/communities/.}  We call the modularity associated with this algorithm the ``Louvain modularity.''  Unlike the $C$, which describes local properties, $M$ is descriptive of the community level.  For the $31$ networks we considered, $M$ and $C$  appear uncorrelated ($R^2 = 0.0538$, $p=0.2092$).

We plotted the initial seed set size ($S$) (from our algorithm - averaged over the $10$ threshold settings) as a function of $M$ and $C$ (Figure~\ref{main-fig}a) and uncovered a correlation (planar fit, $R^2=0.8666$, $p=5.666 \cdot 10^{-13}$, see Figure~\ref{main-fig} A).  The majority of networks in Category C (less susceptible to spreading) were characterized by relatively large $M$ and $C$ (Category C includes the top nine networks w.r.t. $C$ and top five w.r.t. $M$).  Hence, networks with dense, segregated, and close-knit communities (large $M$ and $C$) suppress spreading.  Likewise, those with low $M$ and $C$ tended to promote spreading.  Also, we note that there were networks that promoted spreading with dense and segregated communities, yet were less clustered (i.e. Category A networks Friendster and LiveJournal both have $M\geq 0.65$ and $C \leq 0.13$).  Further, some networks with a moderately large clustering coefficient were also in Category A (two networks extracted from BlogCatalog had $C\geq 0.46$) but had a relatively less dense community structure (for those two networks $M \leq 0.33$).

\begin{figure}
    \begin{center}
        \includegraphics[width=1\linewidth]{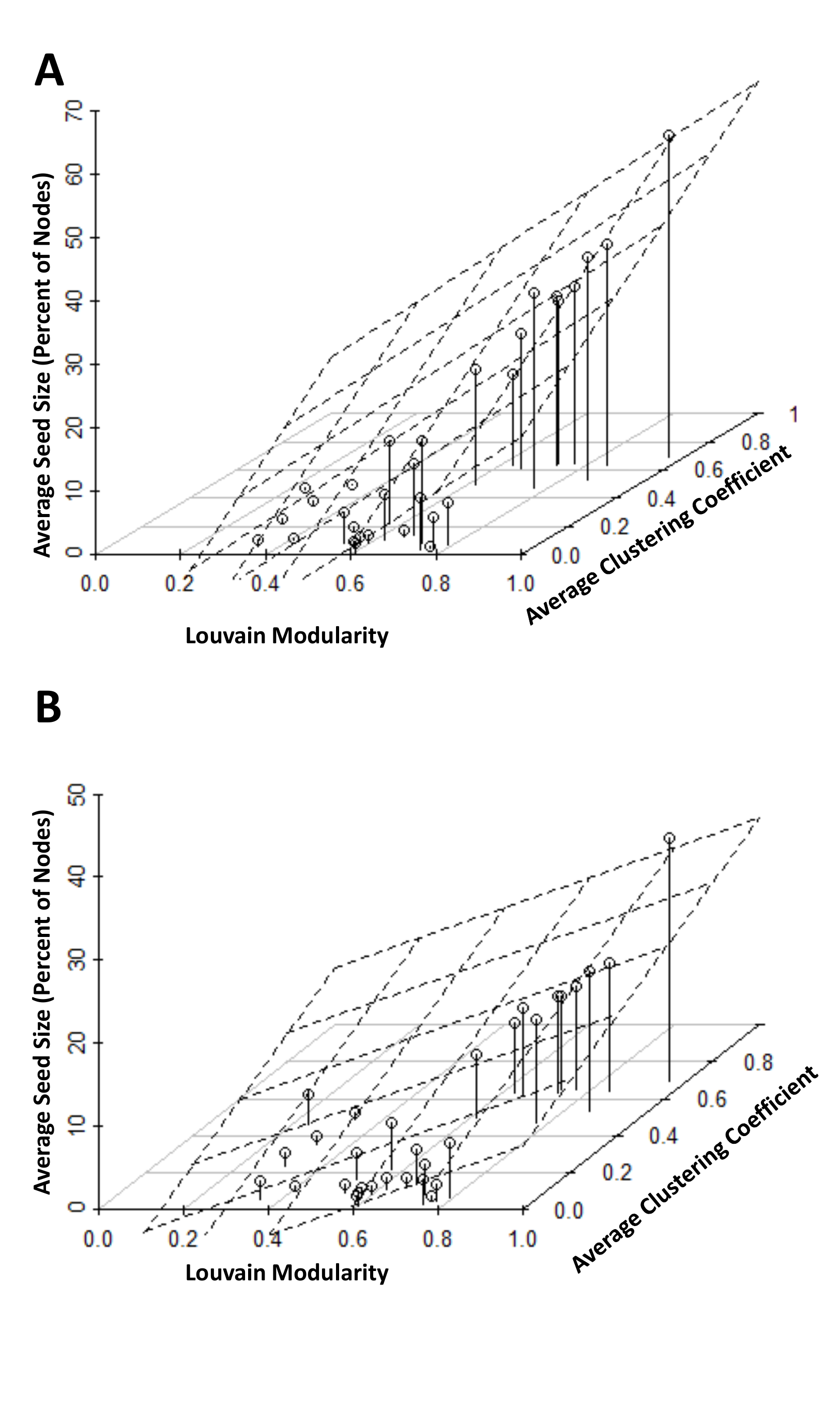}
    \end{center}
    \caption{\textbf{(A)} Louvain modularity ($M$) and average clustering coefficient ($C$) vs. the average seed size ($S$).  The planar fit depicted is $S=43.374 \cdot M +  33.794 \cdot C - 24.940$ with $R^2=0.8666$, $p=5.666 \cdot 10^{-13}$.  \textbf{(B)} Same plot at (A) except the averages are over the 12 percentage-based threshold values.  The planar fit depicted is $S=18.105 \cdot M + 17.257 \cdot C - 10.388$ with $R^2=0.816$, $p=5.117 \cdot 10^{-11}$.}
    \label{main-fig}
\end{figure}

We also studied the effects on spreading when the threshold values would be assigned as a certain fraction of the node's in-degree.~\cite{jy05,wattsDodds07}  This results in heterogeneous $\theta_i$'s for the nodes.  We performed $12$ trials for each network.  Thresholds for each trial were based on the product of in-degree and a fraction in the interval $[0.05,0.60]$ (multiples of $0.05$).  The results (Figure~\ref{figFrac} and Table~\ref{figX}) were analogous to our integer tests.  We also compared the averages over these trials with $M$ and $C$ and obtained similar results as with the other trials (Figure~\ref{main-fig} B).

\begin{figure}
    \begin{center}
        \includegraphics[width=1\linewidth]{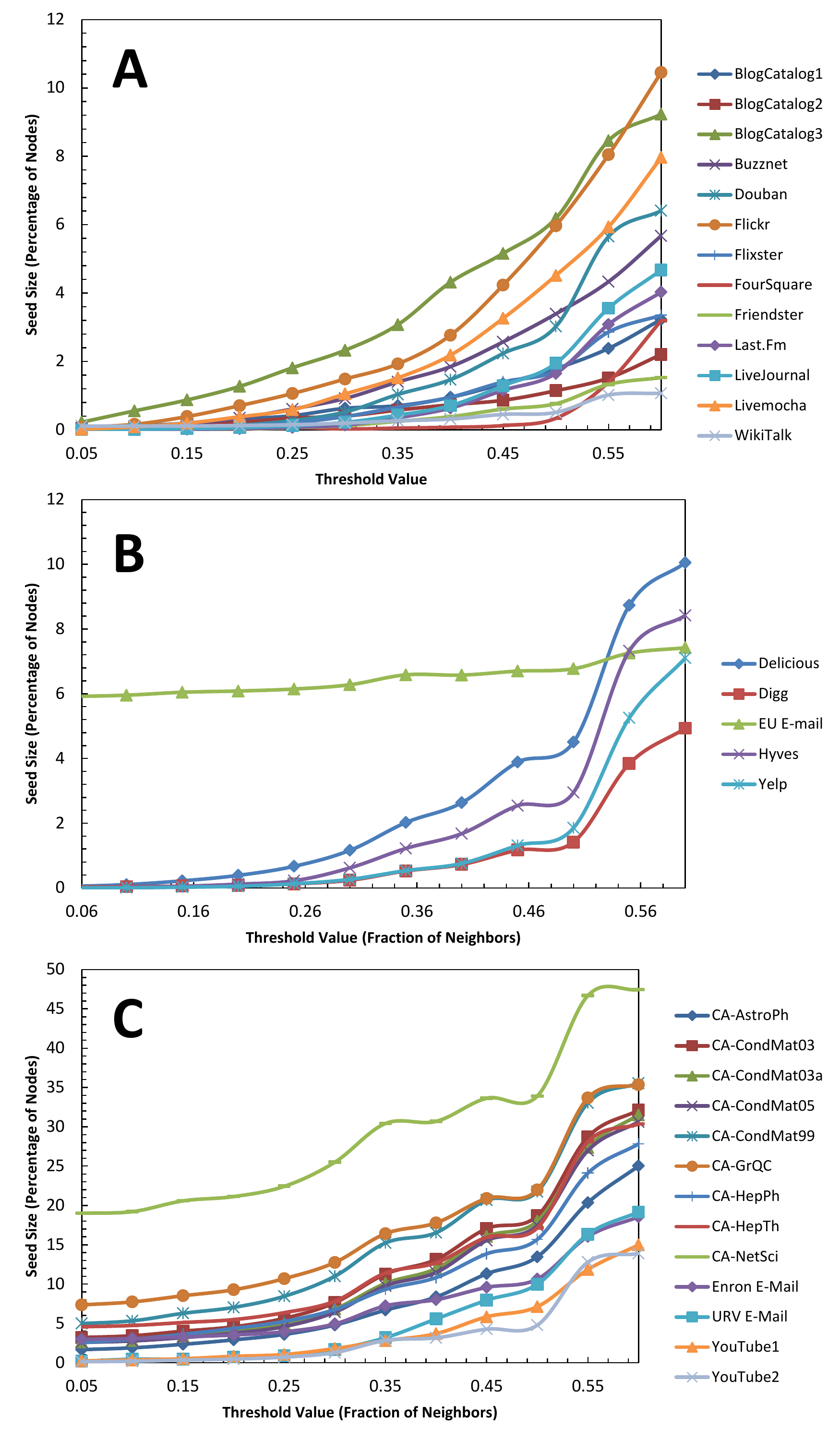}
    \end{center}
   \caption{Threshold value (assigned as a fraction of node in-degree as a multiple of $0.05$ in the interval $[0.05,0.60]$) vs. size of initial seed set as returned by our algorithm in our three identified categories of networks (categories A-C are depicted in panels A-C respectively, categories are the same as in Figure 1). Average seed sizes were under $5\%$ for Categorty A, $1-7\%$ for Category B and over $3\%$ for Category C.  In general, the relationship between threshold and initial seed size for networks in all categories was exponential.}
    \label{figFrac}
\end{figure}

\begin{table}[ht]
     \begin{center}
        \includegraphics[width=.8\linewidth]{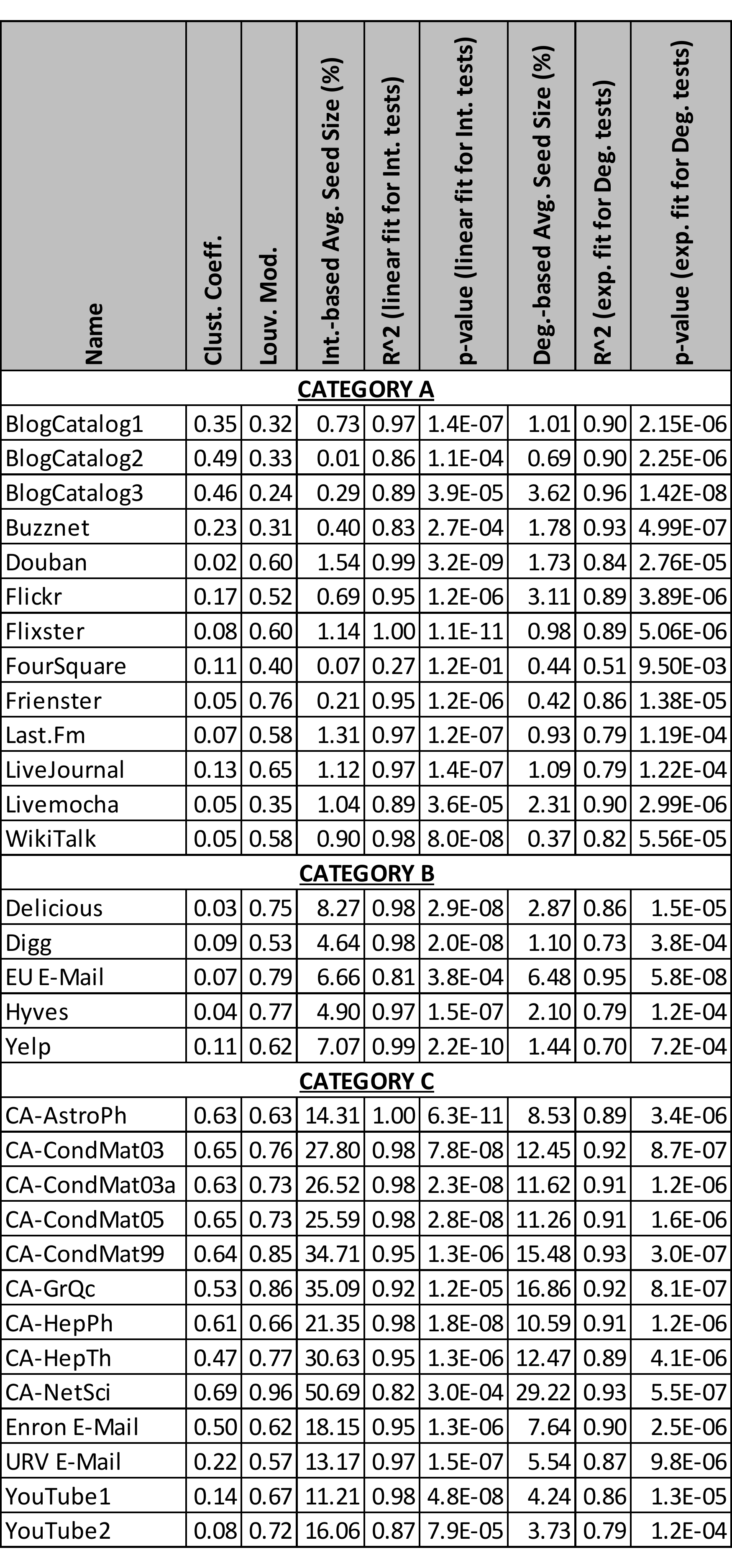}
    \end{center}
    \caption{Regression analysis and network-wide measures for the networks in Categories A, B, and C.}
    \label{figX}
\end{table}

\section{Related Work}
\label{rw-sec}
Tipping models first became popular by the works of \cite{Gran78} and \cite{Schelling78} where it was presented primarily in a social context.  Since then, several variants have been introduced in the literature including the non-deterministic version of \cite{kleinberg} (described later in this section) and a generalized version of \cite{jy05}.  In this paper we focused on the deterministic version.  In \cite{wattsDodds07}, the authors look at deterministic tipping where each node is activated upon a percentage of neighbors being activated.  Dryer and Roberts \cite{Dreyer09} introduce the MIN-SEED problem, study its complexity, and describe several of its properties w.r.t. certain special cases of graphs/networks.  The hardness of approximation for this problem is described in \cite{chen09siam}.  The work of \cite{benzwi09} presents an algorithm for target-set selection whose complexity is determined by the tree-width of the graph - though it provides no experiments or evidence that the algorithm can scale for large datasets.  The recent work of \cite{reichman12} prove a non-trivial upper bound on the smallest seed set.

Our algorithm is based on the idea of shell-decomposition that currently is prevalent in physics literature.  In this process, which was introduced in \cite{Seidman83}, vertices (and their adjacent edges) are iteratively pruned from the network until a network ``core'' is produced.  In the most common case, for some value $k$, nodes whose degree is less than $k$ are pruned (in order of degree) until no more nodes can be removed.  This process was used to model the Internet in \cite{ShaiCarmi07032007} and find key spreaders under the SIR epidemic model in \cite{InfluentialSpreaders_2010}.  More recently, a ``heterogeneous'' version of decomposition was introduced in \cite{baxter11} - in which each node is pruned according to a certain parameter - and the process is studied in that work based on a probability distribution of nodes with certain values for this parameter.

\subsection{Notes on Non-Deterministic Tipping}
We also note that an alternate version of the model where the thresholds are assigned randomly has inspired approximation schemes for the corresponding version of the seed set problem.\cite{kleinberg,leskovec07,chen10}  Work in this area focused on finding a seed set of a certain size that maximizes of the expected number of adopters.  The main finding by Kempe et al., the classic work for this model, was to prove that the expected number of adopters was submodular - which allowed for a greedy approximation scheme.  In this algorithm, at each iteration, the node which allows for the greatest increase in the expected number of adopters is selected.  The approximation guarantee obtained (less than $0.63$ of optimal) is contingent upon an approximation guarantee for determining the expected number of adopters - which was later proved to be $\#P$-hard.~\cite{chen10}  Though finding a such a guarantee is still an open question, work on counting-complexity problems such as that of Dan Roth~\cite{roth96} indicate that a non-trivial approximation ratio is unlikely.  Further, the simulation operation is often expensive - causing the overall time complexity to be $O(x \cdot n^2)$ where $x$ is the number of runs per simulation and $n$ is the number of nodes (typically, $x>n$).  In order to avoid simulation, various heuristics have been proposed, but these typically rely on the computation of geodesics - an $O(n^3)$ operation - which is also more expensive than our approach.

Additionally, the approximation argument for the non-deterministic case does not directly apply to the original (deterministic) model presented in this paper.  A simple counter-example shows that sub-modularity does not hold here. Sub-modularity (diminishing returns) is the property leveraged by Kempe et al. in their approximation result.

\subsection{Note on an Upper Bound of the Initial Seed Set}

Very recently, we were made aware of research by Daniel Reichman that proves an upper bound on the minimal size of a seed set for the special case of undirected networks with homogeneous threshold values.~\cite{reichman12}  The proof is constructive and yields an algorithm that mirrors our approach (although Reicshman's algorithm applies only to that special case).  We note that our work and the work of Reichman were developed independently.  We also note that Reichman performs no experimental evaluation of the algorithm.

Given undirected network $G$ where each node $v_i$ has degree $d_i$ and the threshold value for all nodes is $k$, Reichman proves that the size of the minimal seed set can be bounded by $\sum_i \min\{1, \frac{k}{d_i+1}\}$.  For our integer tests, we compared our results to Reichman's bound.  Our seed sets were considerably smaller - often by an order of magnitude or more.  See Figure~\ref{fig4a} for details.

\begin{figure}
    \begin{center}
        \includegraphics[width=1\linewidth]{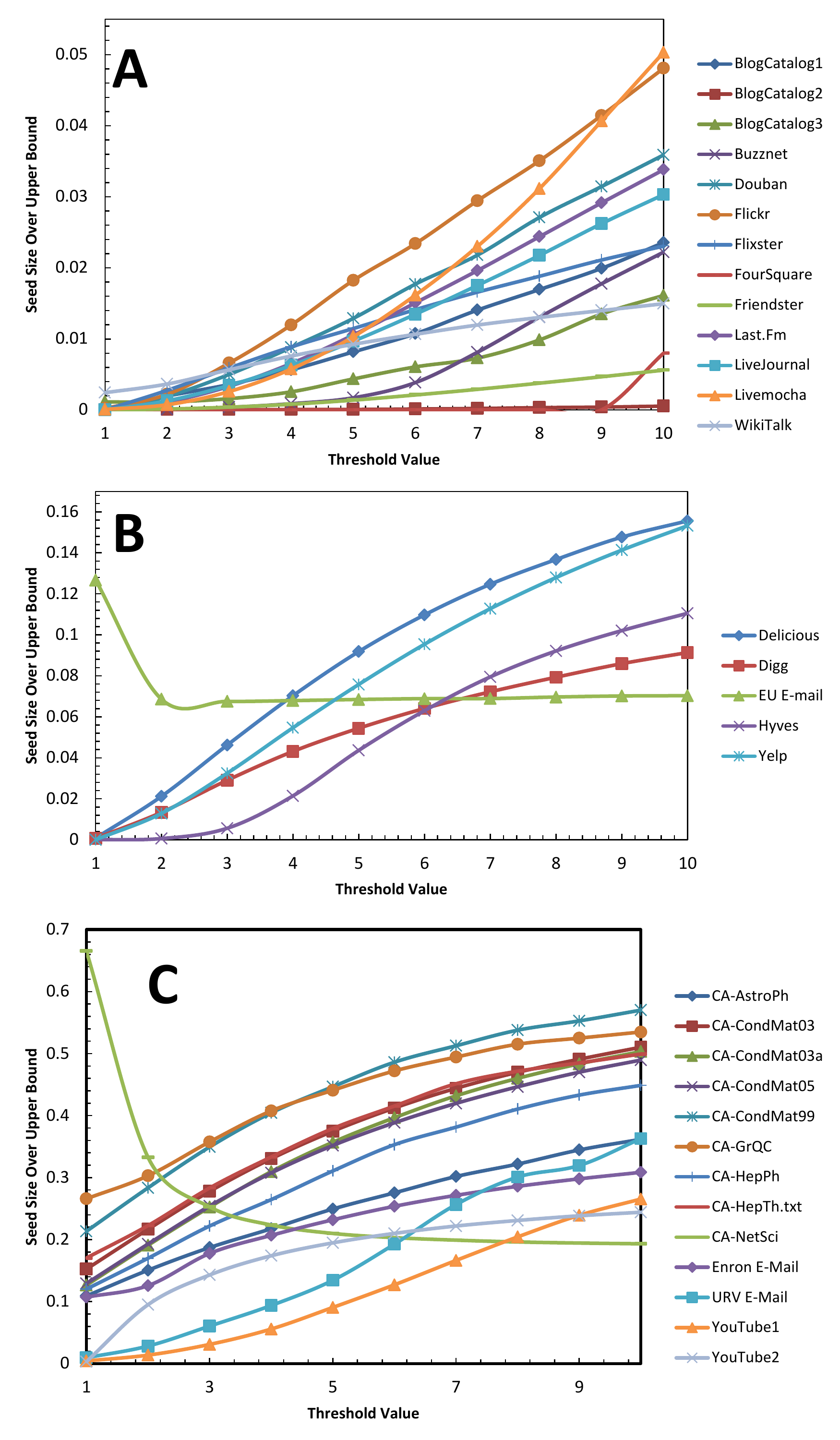}
    \end{center}
    \caption{Integer threshold values vs. the seed size divided by Reichman's upper bound~\cite{reichman12} the three categories of networks (categories A-C are depicted in panels A-C respectively).  Note that in nearly every trial, our algorithm produced an initial seed set significantly smaller than the bound - in many cases by an order of magnitude or more.}
    \label{fig4a}
\end{figure}

\section{Conclusion}

As recent empirical work on tipping indicates that it can occur in real social networks,\cite{centola10,zhang11} our results are encouraging for viral marketers.  Even if we assume relatively large threshold values, small initial seed sizes can often be found using our fast algorithm - even for large datasets.  For example, with the FourSquare online social network, under majority threshold ($50\%$ of incoming neighbors previously adopted), a viral marketeer could expect a $297$-fold return on investment.  As results of this type seem to hold for many online social networks, our algorithm seems to hold promise for those wishing to ``go viral.''

\section*{Acknowledgments}
We would like to thank Gaylen Wong (USMA) for his technical support.  Additionally, we would like to thank (in no particular order) Albert-L\'{a}szl\'{o} Barab\'{a}si (NEU), Sameet Sreenivasan (RPI), Boleslaw Szymanski (RPI), John James (USMA), and Chris Arney (USMA) for their discussions relating to this work.  Finally, we would also like to thank Megan Kearl, Javier Ivan Parra, and Reza Zafarani of ASU for their help with some of the datasets.  
The authors are supported under by the Army Research Office (project 2GDATXR042) and the Office of the Secretary of Defense (project F1AF262025G001).  The opinions in this paper are those of the authors and do not necessarily reflect the opinions of the funders, the U.S. Military Academy, or the U.S. Army.

\bibliographystyle{IEEEtranS}
\bibliography{ShakarianAndPaulo-tgtsInSocNws}

\end{document}